\title{The Importance of Knowing the Arrival Order in Combinatorial Bayesian Settings}
\author{Tomer Ezra\thanks{Sapienza University of Rome, Email:  \texttt{ezra@diag.uniroma1.it}} \and  Tamar Garbuz}
\date{}
\newcommand{\cf}{\mathcal{F}}
\newcommand{\dists}{\mathcal{D}}
\newcommand{\dist}{D}
\newcommand{\inst}{\mathcal{I}}
\newcommand{\elements}{E}
\newcommand{\alg}{ALG}
\newcommand{\opt}{OPT}
\newcommand{\pigood}{\pi_1}
\newcommand{\pibad}{\pi_0}
\newcommand{\piind}[2]{\pi_{\mathbbm{1}_{ #2 \in r_#1  }}}
\newcommand{\concat}{\cdot}
\newcommand*{\eqdef}{\stackrel{\text{def}}{=}}
\newcommand{\ificalp}[1]{}
\newtheorem{theorem}{Theorem}[section]
\newtheorem{claim}[theorem]{Claim}
\newcommand{\ordist}{F}
\newcommand{\Ex}{E}
\newtheorem{example}{Example}[section]
\begin{document}

\maketitle
\begin{abstract}
    We study the measure of order-competitive ratio introduced by \ificalp{Ezra et al.} \citet{ezra2023next} for online algorithms in Bayesian combinatorial settings.
In our setting, a decision-maker observes a sequence of elements that are associated with stochastic rewards that are drawn from known priors, but revealed one by one in an online fashion.
The decision-maker needs to decide upon the arrival of each element whether to select it or discard it (according to some feasibility constraint), and receives the associated rewards of the selected elements. 
The order-competitive ratio is defined as the worst-case ratio (over all distribution sequences) between the performance of the best order-unaware and order-aware algorithms, and quantifies the loss incurred due to the lack of knowledge of the arrival order.

\ificalp{Ezra et al.} \citet{ezra2023next} showed how to design algorithms that achieve better approximations with respect to the new benchmark (order-competitive ratio) in the single-choice setting, which raises the natural question of whether the same can be achieved in combinatorial settings. In particular, whether it is possible to achieve a constant approximation with respect to the best online algorithm for downward-closed feasibility constraints, whether $\omega(1/n)$-approximation is achievable for general (non-downward-closed) feasibility constraints, or whether a convergence rate to $1$ of $o(1/\sqrt{k})$ is achievable for the multi-unit setting.
We show, by devising novel constructions that may be of independent interest, that for all three scenarios, the asymptotic lower bounds with respect to the old benchmark, also hold with respect to the new benchmark.

\end{abstract}

\section{Introduction}
We revisit the prophet inequality problem in combinatorial settings.
In the prophet inequality setting \citep{krengel1977semiamarts, krengel1978semiamarts, samuel1984comparison} there is a sequence of boxes, each contains a stochastic reward drawn from a known distribution. The rewards are revealed one by one to a decision-maker, that needs to decide whether to take the current reward, or continue to the next box. The decision-maker needs to make the decisions in an immediate and irrevocable way, where her goal is to maximize her expected selected reward. 
The most common performance measure for the analysis of the decision-maker policy is the competitive-ratio, which is the ratio between the expectation of the selected reward and the expected maximum reward. That is, the decision-maker is evaluated by comparison to a ``prophet'' who can see into the future and select the maximal reward. 
This framework has been extended to combinatorial settings, where the decision-maker is allowed to select a set of boxes (instead of only one) under some predefined feasibility constraints, such as multi-unit \citep{hajiaghayi2007automated,DBLP:journals/siamcomp/Alaei14}, matroids \citep{KleinbergW19}, matching \citep{feldman2014combinatorial,ezra2022prophet}, and downward-closed (or even general) feasibility constraints \citep{DBLP:conf/stoc/Rubinstein16}.
    
A recent line of work studied the (combinatorial) prophet setting when instead of comparing to the best offline optimum (or the ``prophet''), they compare against the best online algorithm \citep{niazadeh2018prophet,papadimitriou2021online,braverman2022max}, and showed how to achieve tighter approximations compared to the best online algorithms.    

Recently, \ificalp{Ezra et al.} \citet{ezra2023next} suggested the benchmark termed ``order-competitive ratio'' defined as the worst-case ratio (over all distribution sequences) between the expectations of the best \textit{order-unaware} algorithm and the best \textit{order-aware} algorithm. Thus, the order-competitive ratio quantifies the loss that is incurred to the algorithm due to an unknown arrival order.
\ificalp{Ezra et al.} \citet{ezra2023next}  showed that for the single-choice prophet inequality setting, it is possible to achieve $1/\phi$-approximation with respect to the new benchmark (where $\phi$ is the golden ratio).
In particular, they showed a separation between what adaptive and static algorithms can achieve with respect to the new benchmark, while with respect to the optimum offline, there is no such separation as a static threshold can achieve the tight approximation of $1/2$.

The question that motivates this paper is whether one can achieve improved approximations for the new benchmark in combinatorial settings.
In particular, whether it is possible to achieve a constant approximation with respect to the best online algorithm for downward-closed feasibility constraints, whether $\omega(1/n)$-approximation is achievable for general (non-downward-closed) feasibility constraints, or whether a convergence rate to $1$  of $o(1/\sqrt{k})$ is achievable for the multi-unit setting.

\subsection{Our Contribution, Techniques, and Challenges}
We study this question in three natural and generic combinatorial structures: $k$-uniform matroid (also known as multi-unit), downward-closed, and arbitrary (not downward-closed) feasibility constraints.

The first scenario we consider is downward-closed feasibility constraints. We first revisit the example in \citep{DBLP:conf/stoc/Rubinstein16} that is based on the upper bound of \citet{babaioff2007matroids} for a different setting, that shows that no algorithm can achieve an approximation of $\omega\left(\frac{\log\log(n)}{\log(n)}\right)$:
\begin{example}[\citep{babaioff2007matroids}]\label{ex:dc}
Consider a set of $n=2^{2^k}$ elements, that are partitioned into $2^{2^k-k}$ parts, each of size $2^k$. The reward of each element is $1$ with probability $2^{-k}$ and $0$ otherwise. The feasibility constraint is such that the decision-maker is allowed to select elements from at most one part of the partition. The elements arrive in an arbitrary order. It is easy to verify that the expected value of the prophet is $\Omega(2^k)$, since it is a maximum of $2^{2^k-k}$ random variables that are distributed according to $Bin(2^k,2^{-k})$.
On the other hand, no online algorithm can have an expected reward of more than $2$, since once the algorithm decides to select an element (with a value at most $1$), then the expectation of the sum of the remaining feasible elements is bounded by $1$.
\end{example}
As can be observed in Example~\ref{ex:dc}, the instance is constructed in a way that no online algorithm (order aware or unaware) can achieve an expected reward of more than $2$, while achieving an expected reward of $1$ is trivial. 
Thus, it fails to show a gap between what order-aware and order-unaware algorithms can achieve.
This leads us to our first result.

\vspace{-0.1in}

\paragraph*{Result A (Theorem~\ref{thm:hardness}):} No order-unaware algorithm can achieve an approximation of $\omega\left(\frac{\log\log(n)}{\log(n)}\right)$ with respect to the best order-aware online algorithm.
\vspace{0.2in}

To show Result A, we need to develop an entirely different construction than the one used in \citep{babaioff2007matroids}. Their construction is such that once the online algorithm selects an arbitrary element, it eliminates all the flexibility that the algorithm had in choosing elements due to the feasibility constraint.
All attempts that are only based on the construction of the feasibility constraint, are destined to fail since the feasibility constraint will influence both the order-aware and order-unaware algorithms in the same way.
Thus, we construct a pair of a feasibility constraint and a distribution over arrival orders. Our elements are partitioned into $k$ layers, and within each layer, the elements are symmetric (with respect to the feasibility constraint). An algorithm needs to select at most one element of each layer. The difference between the elements within the layers, is the role with respect to the arrival order, which draws half of them to be ``good'', and half of them to be ``bad''. ``Good'' elements, are such that the best order-aware algorithm does not lose a lot by choosing them, and ``bad'' elements, are such that the best order-aware algorithm does lose a lot by choosing them.
An order-aware algorithm can distinguish between ``good'' and ``bad'' elements and can always choose the ``good'' ones, while an order-unaware algorithm cannot distinguish between them, therefore cannot do better than guessing and thus it will guess a ``bad'' one after a constant number of layers in expectation.

\vspace{0.1in}

The second scenario that we consider is of arbitrary feasibility constraints. For this problem with respect to the best offline algorithm as a benchmark, \citet{DBLP:conf/stoc/Rubinstein16} showed that no online algorithm can achieve a competitive-ratio of $\omega\left(\frac{1}{n}\right)$. Achieving a competitive-ratio of $\frac{1}{n}$ can be done trivially by selecting the feasible set with the maximal expectation.
We next revisit the example in \cite{DBLP:conf/stoc/Rubinstein16} that shows that no online algorithm can achieve an approximation of 
$\omega\left(\frac{1}{n}\right)$.

\begin{example}[\citep{DBLP:conf/stoc/Rubinstein16}]\label{ex:ndc}
    Consider an instance with $n=2k$ elements, where the collection of feasible sets is $\left\{\{i,i+k\} \mid i \in [k] \right\}$. The elements arrive according to the order $1,\ldots,n$, and the value of each element in $[k]$ is deterministically $0$, while the value of each element in $\{k+1,\ldots,n\}$ is $1$ with probability $\frac{1}{n}$, and $0$ otherwise.
    The prophet receives a value of $1$ if one of the elements of the second type has a non-zero value, which happens with a constant probability.
    Every online algorithm must select exactly one element among the elements of the first type, which restricts the algorithm to select a specific element of the second type, therefore every online algorithm has an expected value of $\frac{1}{n}$.
\end{example}

As can be observed in Example~\ref{ex:ndc}, the instance is constructed with a fixed order, and the optimal algorithm for this feasibility constraint (even for every arrival order), is to discard all zero-value elements and select all elements with a value of $1$ as long as there is a way to complete the chosen set to a feasible set. This algorithm is an order-unaware algorithm, and therefore this construction does not induce a separation between what order-unaware and order-aware algorithms can achieve. This leads us to our second result.

\vspace{-0.1in}

\paragraph*{Result B (Theorem~\ref{thm:non-downward}):} No order-unaware algorithm can achieve an approximation of $\frac{1+\Omega(1)}{n}$ with respect to the best order-aware online algorithm.
\vspace{0.2in}

Our result improves upon the result in \cite{DBLP:conf/stoc/Rubinstein16} in two dimensions. First, our result is with respect to the tighter benchmark of the best online algorithm rather than the best offline algorithm. Second, our upper bound matches the lower bound, up to low-order terms (and not just up to a constant).

To show Result B, we create three types of elements: The first type of elements is of elements with a value of $1$ with a small probability. Almost all elements are of this type, and the utility of the instance comes from these elements. The feasibility constraint requires to select exactly one of these elements. The elements of the other two types have a deterministic value of $0$, and their role is to limit the ability of the algorithm to select elements of the first type.
The feasibility constraint is such that for each subset of elements of type 2, and each element of type 1, there is exactly one subset of elements of type 3 such that their union is feasible. The order of arrival is such that in Phase 1, the elements of type 2 arrive, in Phase 2, most of the elements of type 3, in Phase 3, the elements of type 1 arrive, and in Phase 4, the remaining (few) elements of type 3 arrive.
For exactly one subset $X$ of the elements of type 2,  it holds that: for each element $e$ of type 1, there is a subset $X_e$ of elements of type 3 that arrive in Phase 4, such that $X\cup \{e\} \cup X_e$ is a feasible set. For all other choices of $X$, there are at most a few feasible elements of type 3 that arrive at Phase 4, which restricts the algorithm to choose only among a few elements of type 1.
The only way to ``catch'' the value of all the elements of type 1, is to correctly guess the unique good subset $X$ of type 2 with this special property. An order-aware algorithm can always guess it correctly as this information can be derived from the arrival order (since it knows the partition of elements of type 3 between Phase 2 and Phase 4), while an order-unaware cannot guess the correct subset with high enough probability, and therefore it loses a factor of $\frac{1}{n}$ in the approximation.

\vspace{0.1in}

The third scenario that we consider is of $k$-capacity feasibility constraints. For this problem with respect to the best offline algorithm as a benchmark, \ificalp{Hajiaghayi et al.} \citet{hajiaghayi2007automated} showed that no online algorithm can achieve a competitive-ratio of $1-o\left(\frac{1}{\sqrt{k}}\right)$. Achieving a competitive-ratio of $1-\Theta\left(\frac{1}{\sqrt{k}}\right)$ with respect to the best-offline is achieved by \ificalp{Alaei} \citet{DBLP:journals/siamcomp/Alaei14}.

Our last result shows, that one cannot achieve an order-competitive ratio that converges to $1$ in a faster rate (up to a constant).

\vspace{-0.1in}

\paragraph*{Result C (Theorem~\ref{thm:kunit}):} No order-unaware algorithm can achieve an approximation of $1-o\left(\frac{1}{\sqrt{k}}\right)$ with respect to the best order-aware online algorithm.


To show Theorem~C, we construct an instance with three types of elements. The first type is with a deterministic low value, the second type is with a deterministic mid-value, and the third type is randomized, with a probability half of being high, and a probability half of being zero. The order of arrival is such that all the type~2 elements arrive first, and then either all elements with type~1 arrive before all elements of type~3 which is considered the ``bad'' order, or vice versa which is the ``good'' order. An algorithm that knows whether it is a good order or a bad order, can adapt the number of elements of type~2 to choose in an optimal way, while an algorithm that does not know the order needs to commit to selecting elements of type~2 before any information regarding the order is revealed. Our analysis then follows by balancing the low, mid, and high values in a way that an order-unaware  algorithm that commits to selecting a certain amount of elements of type 2, will be far from the optimal order-aware algorithm for one of the two arrival orders.

\subsection{Further Related Work}
\paragraph*{Comparing to the best online.}
Our work is largely related to a line of research that examines alternative benchmarks for the best offline benchmark, and in particular, comparing its performance to the best online algorithm
\citep{niazadeh2018prophet,kessel2022stationary,papadimitriou2021online,saberi2021greedy,braverman2022max,ezra2023next}.
One example, \ificalp{Niazadeh et al.}\cite{niazadeh2018prophet} showed that the original tight prophet inequality bounds comparing the single-pricing with the optimum offline are tight even when comparing to the optimum online as a benchmark (both for the identical and non-identical distributions).
Another example is that \ificalp{Papademitriou et al.} \cite{papadimitriou2021online} studied the online stochastic maximum-weight matching problem under vertex arrivals, and presented a polynomial-time algorithm which approximates the optimal online algorithm within a factor of 0.51, which was later improved by \ificalp{Saberi and Wajc} \cite{saberi2021greedy} to 0.526, and to $1-1/e$ by \ificalp{Braverman et al.} \cite{braverman2022max}.
\ificalp{Kessel et al.} \cite{kessel2022stationary} studied a continuous and infinite time horizon counterpart to the classic prophet inequality, term the stationary prophet inequality problem. They showed how to design pricing-based policies which achieve a tight $\frac{1}{2}$-approximation to the optimal offline policy, and a better than $(1-1/e)$-approximation of the optimal online policy. 


\paragraph*{Prophet in combinatorial settings.}
Another line of work, initiated by \ificalp{Kennedy} \citet{kennedy1985optimal,kennedy1987prophet}, and \ificalp{Kertz}\citet{kertz1986comparison},
extends the single-choice optimal stopping problem to multiple-choice settings. Later work extended it to additional combinatorial settings, including multi-unit \citep{hajiaghayi2007automated,DBLP:journals/siamcomp/Alaei14} matroids~\citep{KleinbergW19,azar2014prophet}, 
polymatroids~\citep{dutting2015polymatroid}, matching~\citep{GravinW19,ezra2022prophet}, combinatorial auctions \citep{feldman2014combinatorial,dutting2020prophet,dutting2020log}, and downward-closed (and beyond) feasibility constrains~\citep{DBLP:conf/stoc/Rubinstein16}. 

\paragraph*{Different arrival models.}
A related line of work studied different assumptions on the arrival order besides the adversarial order~\cite{krengel1977semiamarts,krengel1978semiamarts,samuel1984comparison}.
Examples for such assumptions are random arrival order (also known as the prophet secretary) \cite{esfandiari2017prophet,azar2018prophet,ehsani2018prophet,CorreaSZ21}, and free-order settings, where the algorithm may dictate the arrival order~\cite{beyhaghi2018improved,AgrawalSZ20,PengT22}. Another recent study related to the arrival order has shown that for any arrival order $\pi$, the better of $\pi$ and the reverse order of $\pi$ achieves a competitive-ratio of at least the inverse of the golden ratio  \citep{arsenis2021constrained}.

\section{Model}
An instance $\inst$ of our setting is defined by a triplet $\inst=(\elements,\dists,\cf)$ where $\elements$ is the ground set of elements, each element $e\in \elements$ is associated with a distribution $\dist_e\in \dists$, and a feasibility constraint $\cf \subseteq 2^\elements$ over the set of elements (where $\cf\neq \emptyset$).
The elements arrive one by one. Upon the arrival of element $e$, its identity is revealed, and a value $v_e$ is drawn independently from the underlying  distribution $\dist_e$.
 We call an instance $\inst$  binary if for every element $e\in\elements$, the support of $\dist_e$ is $\{0,1\}$.

A decision-maker, who observes the sequence of elements and their values, needs to decide upon the arrival of each element whether to select it or not subject to the feasibility constraint $\cf$, which asserts that the set that is chosen at the end of the process (after all elements arrive) must belong to $\cf$. Another interpretation of the feasibility constraint, is that the decision-maker must select (respectively discard) element $e$ if all feasible sets that agree with all previous decisions before the arrival of element $e$, contain (respectively do not contain) element $e$. A feasibility constraint $\cf$ is called downward-closed if for every set $S\in \cf$, and a subset $T\subseteq S$, then $T$ must be in $\cf$. For downward-closed feasibility constraints, discarding elements is always feasible. The decision-maker's utility is the sum of the values of the selected elements.

We say that a decision-maker (or algorithm) is \textit{order-unaware} if she does not know the arrival order of the elements in advance, and needs to make decisions with uncertainty regarding the order of the future arriving elements. 
We say that a decision-maker (or algorithm) is \textit{order-aware}, if she knows the order of arrival of the elements in advance, and can base her decisions on this information. 
Given an instance $\inst$, an order of arrival of the elements $\pi$, and an algorithm $\alg_\inst$ (that might be order-unaware, or order-aware), we will denote the expected utility of $\alg_\inst$ for arrival order $\pi$, by $\alg_\inst(\pi)$.
Given an instance $\inst$ and an arrival order $\pi$, we will denote the order-aware algorithm with the maximal expected utility by $\opt_{\inst,\pi}$, i.e., $\opt_{\inst,\pi}\eqdef \arg\max_{\alg_\inst} \alg_{\inst}(\pi)$.

We want to quantify the importance of knowing the order in advance,  and to do so, we use the measure of \textit{order-competitive ratio} proposed by \citet{ezra2023next} for the case of choosing a single element (i.e, $\cf = \{S\subseteq E \mid |S|\leq 1\}$).
Given an instance $\inst$, the order-competitive ratio of an order-unaware algorithm $\alg_\inst$, denoted by $\rho(\inst, \alg_\inst)$ is  
\begin{equation}
    \rho(\inst, \alg_\inst) \eqdef \min_{\pi } \frac{\alg_\inst(\pi)}{\opt_{\inst,\pi}(\pi)}. \label{eq:ocr}
\end{equation}

We use $[j]$ to denote the set  $\{1,\ldots,j\}$. Given two partial orders $\pi^1 = (e^1_1,\ldots,e^1_{k_1})$, and $ \pi^2=(e^2_1,\ldots,e^2_{k_2})$ over two disjoint subsets of elements $\elements_1,\elements_2 \subseteq \elements$, we  define the order $\pi^1 \concat \pi^2 \eqdef
(e^1_1,\ldots,e^1_{k_1},e^2_1,\ldots,e^2_{k_2})$.

In this paper, we use the following forms of  Chernoff bound:
\begin{theorem}[Chernoff bound]
For a series of $n$ independent Bernoulli random variables $X_1,\ldots,X_n$, and for $X=\sum_{i=1}^n X_i$ it holds:
\begin{itemize}
    \item For all $ 0 \leq \delta \leq 1$,  $\Pr\left[|X-\Ex[X]| \geq \delta \Ex[X]\right] \leq 2 e^{-\delta^2 \cdot \Ex[X]/3}. $
    \item 
For all $\delta  \geq 0$, $ \Pr\left[X\geq (1+\delta)\Ex[X]\right] \leq e^{-\delta^2 \cdot \Ex[X]/(2+\delta)}. $
\end{itemize}
\end{theorem}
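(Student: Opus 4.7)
The plan is to prove both tail bounds via the standard moment-generating-function (MGF) approach: apply Markov's inequality to the random variable $e^{tX}$ for a carefully chosen $t$, and use independence to factor the MGF of $X$. The workhorse estimate is $1+y \leq e^y$, which for each Bernoulli $X_i$ with mean $p_i$ gives $\Ex[e^{tX_i}] = 1 + p_i(e^t-1) \leq \exp(p_i(e^t-1))$. Writing $\mu = \Ex[X]$ and multiplying over $i$, independence yields $\Ex[e^{tX}] \leq \exp(\mu(e^t-1))$, which will be the backbone of both bullets.

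For the second bullet (upper tail, $\delta \geq 0$), I would apply Markov's inequality to $e^{tX}$ with $a = (1+\delta)\mu$ and the optimal choice $t = \ln(1+\delta) \geq 0$, obtaining
\[
\Pr[X \geq (1+\delta)\mu] \;\leq\; \frac{\exp(\mu(e^t-1))}{e^{ta}} \;=\; \left(\frac{e^\delta}{(1+\delta)^{1+\delta}}\right)^{\mu}.
\]
It then remains to show the (purely analytic) inequality $\tfrac{e^\delta}{(1+\delta)^{1+\delta}} \leq \exp(-\delta^2/(2+\delta))$, i.e.\ $f(\delta) := \delta - (1+\delta)\ln(1+\delta) + \delta^2/(2+\delta) \leq 0$ for all $\delta \geq 0$. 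I would verify this by checking $f(0)=0$ and $f'(\delta) \leq 0$ throughout $[0,\infty)$ via a short differentiation, which reduces to an elementary inequality between $\ln(1+\delta)$ and a rational function of $\delta$.

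For the first (two-sided) bullet, I would prove the companion lower-tail bound in the same way: applying Markov's inequality to $e^{-tX}$ with $t > 0$ and optimizing at $t = -\ln(1-\delta)$ gives, for $\delta \in [0,1]$, the standard estimate $\Pr[X \leq (1-\delta)\mu] \leq \exp(-\delta^2 \mu/2) \leq \exp(-\delta^2 \mu/3)$. On the same range $\delta \in [0,1]$, the upper-tail bound from the previous paragraph specializes to $\exp(-\delta^2 \mu/(2+\delta)) \leq \exp(-\delta^2\mu/3)$ since $2+\delta \leq 3$. A union bound over the events $\{X \geq (1+\delta)\mu\}$ and $\{X \leq (1-\delta)\mu\}$ then yields $\Pr[|X-\mu|\geq \delta\mu] \leq 2\exp(-\delta^2\mu/3)$, as required.

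The hard part is not the probabilistic content, which is essentially mechanical once the MGF trick is set up, but the analytic cleanup: converting the sharp expression $(e^\delta/(1+\delta)^{1+\delta})^\mu$ into the clean exponential form $\exp(-\delta^2\mu/(2+\delta))$ requires the calculus check on $f(\delta)$ described above, and the analogous (slightly easier) check for the lower tail on $[0,1]$. Everything else—Markov, MGF factorization, and the union bound—is essentially one-line each.
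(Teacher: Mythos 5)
Your proof is correct, and it is the standard moment-generating-function (Bernstein--Chernoff) argument. The paper, however, does not prove this theorem at all: the Chernoff bound is stated as a known black-box tool (there is no proof in the paper to compare against), so there is no ``paper's approach'' to contrast with. A few remarks on your argument's soundness: the analytic inequality $(1+\delta)\ln(1+\delta)-\delta \geq \delta^2/(2+\delta)$ does hold for $\delta\geq 0$; the cleanest verification is to note that both sides vanish to second order at $\delta=0$ and compare second derivatives, where the inequality reduces to $(2+\delta)^3 \geq 8(1+\delta)$, which is immediate. Similarly, your lower-tail bound $\Pr[X\leq(1-\delta)\mu]\leq e^{-\delta^2\mu/2}$ follows from $(1-\delta)\ln(1-\delta)+\delta\geq \delta^2/2$ on $[0,1)$ by the same second-derivative comparison ($1/(1-\delta)\geq 1$), with $\delta=1$ handled as a limit. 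Combining the two one-sided tails via the union bound, using $2+\delta\leq 3$ for $\delta\in[0,1]$ and $2\leq 3$, yields the two-sided bound with constant $3$ in the exponent and the factor of $2$ out front, exactly as stated. So the proposal is complete and correct; it simply supplies a proof for a result the paper quotes without one.
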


Lastly, for an instance $\inst=(\elements,\dists,\cf)$, and an algorithm $\alg_\inst$ we denote by $\xi(\inst,\alg_\inst)$ the traditional competitive ratio which is 
\begin{equation}
    \xi(\inst, \alg_\inst) \eqdef \min_{\pi } \frac{\alg_\inst(\pi)}{\Ex[\max_{S \in \cf} \sum_{e\in S} v_e] }. \label{eq:cr}
\end{equation}
It is easy to observe, that for every instance $\inst$, and an algorithm $\alg_\inst$, $$ \xi(\inst, \alg_\inst) \leq \rho(\inst, \alg_\inst),$$
thus, every lower bound on the competitive-ratio also applies to the order-competitive ratio (but not vice versa), and any upper bound on the order-competitive ratio also applies to the order-competitive ratio (but not vice versa).

\section{Downward-Closed Feasibility Constraints}

In this section, we show an upper bound on the order-competitive ratio for the family of downward-closed feasibility constraints.
This upper bound also holds with respect to binary instances and matches the best-known upper bound on the competitive-ratio. The current best-known lower bound for the competitive-ratio for downward-closed feasibility constraints of $O\left(\frac{1}{\log^2(n)}\right)$ was proved by \cite{DBLP:conf/stoc/Rubinstein16}, and closing this gap is an open question. 
\begin{theorem}
There exists a constant $\xi>0$ such that for every $n> 2$ there is a (binary) instance $\inst=(\elements,\dists,\cf)$ with $n=|\elements|$ and a downward-closed feasibility constraint $\cf$ in which for every order-unaware algorithm (deterministic or randomized) $\alg_\inst$, it holds that $$     \rho(\inst, \alg_\inst) \leq \frac{\xi \cdot \log \log n}{\log n}.$$  \label{thm:hardness}
\end{theorem}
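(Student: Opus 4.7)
The plan is to exhibit a binary instance on $n$ elements together with a distribution $\mathcal{P}$ over arrival orders in which the best order-aware algorithm extracts expected reward $\Omega(k)$ while every order-unaware algorithm is limited to $O(1)$, for $k = \Theta(\log n/\log\log n)$. A Yao-style averaging argument then produces some particular order in the support of $\mathcal{P}$ witnessing $\rho(\inst,\alg_\inst) = O(1/k) = O(\log\log n/\log n)$ for every randomized order-unaware algorithm. As the introduction emphasizes, the gap cannot come from the feasibility constraint alone (as in Example~\ref{ex:dc}), since a pure feasibility construction treats order-aware and order-unaware algorithms symmetrically; the separation must instead come from a careful interaction between $\cf$ and $\mathcal{P}$.

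Concretely, I would partition $\elements$ into $k$ layers $L_1,\ldots,L_k$ of roughly equal size $m = n/k$, and give every element an independent value $1$ with probability $p = \Theta(1/m)$. The feasibility constraint $\cf$ would be downward-closed and invariant under arbitrary permutations of the elements within each layer, so that no order-unaware algorithm can distinguish elements of a given layer before the layer's arrivals begin. The key cross-layer structure is: after the arrivals in layers $L_1,\ldots,L_{i-1}$ and a history of selections, the elements of $L_i$ split into a ``good half'' $G_i$ (whose selection preserves nearly full flexibility for the remaining layers) and a ``bad half'' $B_i$ (whose selection collapses the feasible continuations in later layers to essentially nothing). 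The arrival-order distribution $\mathcal{P}$ reveals the good/bad partition of $L_i$ only through the relative order of its elements within the layer's arrival block, which is information the order-aware algorithm possesses up-front but the order-unaware algorithm learns only as the layer unfolds.

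For the analysis, the order-aware side is the easier half: knowing the permutation, the algorithm identifies $G_i$ at decision time and selects the first good element whose realized value is $1$. Since $|G_i|\cdot p = \Theta(1)$, each layer succeeds with constant probability, so the expected total reward is $\Omega(k)$. For the order-unaware side, after reducing to deterministic algorithms via Yao's principle, the within-layer symmetry of $\cf$ combined with the symmetry of $\mathcal{P}$ forces each selection the algorithm makes in $L_i$ to land in $B_i$ with probability at least $1/2$; once this happens, the collapsing property caps all future expected reward at $O(1)$. The number of successful layers is therefore dominated by a geometric random variable with constant mean, yielding $O(1)$ expected total reward.

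The principal obstacle is engineering the feasibility constraint itself: one needs a downward-closed family $\cf$ on $n$ elements that is simultaneously permutation-symmetric within each layer \emph{and} implements the asymmetric collapsing behavior along every history consistent with an order in the support of $\mathcal{P}$, while the parameters $k, m, p$ satisfy the scaling $n = km$ with $k = \Theta(\log n/\log\log n)$. A secondary technical point is to rule out the order-unaware algorithm benefiting from strategically skipping layers: this requires that declining elements of $L_i$ provides no information useful for $L_{i+1},\ldots,L_k$, which is precisely what the within-layer symmetry of $\cf$ and the design of $\mathcal{P}$ are arranged to guarantee.
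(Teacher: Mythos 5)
Your high-level blueprint is the right one and matches the paper's strategy in spirit: a distribution over arrival orders, a Yao-style reduction to deterministic order-unaware algorithms, a good/bad dichotomy at each layer, a $1-\Omega(1)$ per-layer success rate for the order-aware algorithm giving $\Omega(k)$, and a geometric-decay argument giving $O(1)$ for the order-unaware algorithm, with $k=\Theta(\log n/\log\log n)$. But you have explicitly flagged the core missing step yourself, and it is indeed missing: the proposal never produces a downward-closed $\cf$ realizing the ``collapsing'' behavior, and the shape you propose for it does not obviously exist.

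Two concrete issues. First, you ask for layers $L_1,\ldots,L_k$ of \emph{roughly equal size} $m=n/k$ with $\cf$ \emph{invariant under arbitrary within-layer permutations}. A downward-closed family that is symmetric within each layer and depends only on how many elements were picked from each layer cannot implement the asymmetry you need: it would have no way to make ``half the layer good and half bad,'' since within-layer symmetry means the family is blind to which elements of $L_i$ are chosen. The asymmetry must come from cross-layer incidence structure, not cardinality. The paper resolves this by indexing elements by strings over $[k]$ of length $1,\ldots,k$ (so layer $i$ has $k^i$ elements, growing geometrically, not equal-sized, with $\sum_{i=1}^k k^i = n$), and taking $\cf$ to be subsets of a single root-to-leaf path in the resulting $k$-ary tree. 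The relevant symmetry is among \emph{siblings} (children of a common parent), not the whole layer; picking a bad sibling strands the algorithm in a subtree whose arrival order (bottom-up, leaves first) prevents it from ever picking more than one useful ancestor later. Second, the good/bad partition in the paper is not a property of $\cf$ at all---it is determined entirely by the randomly drawn sets $r_s\subseteq[k]$ that shape the arrival order: good siblings' own children arrive next (so the order-aware algorithm can read good/bad off the future arrival sequence), while bad siblings' subtrees arrive in reversed depth order. The order-unaware algorithm must commit to a child of $s$ before the subsequent arrivals reveal whether it is in $r_s$, giving the $1/2$ failure probability per layer. None of this is forced by the ingredients you listed; it is the actual construction, and without it the ``principal obstacle'' you name remains open rather than resolved.

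Everything else in your outline (Yao's principle to pass from a distribution over orders to a single bad order, the constant per-layer success probability from $|G_i|\cdot p = \Theta(1)$, the geometric tail bound for the number of good guesses) is sound and is what the paper does, modulo a small accounting point: the paper also needs to bound the contribution of bad elements the algorithm selects after a misstep, which it handles by observing that a bad-after-bad selection locks the algorithm into at most $k$ ancestors contributing $O(1)$ total in expectation. That bookkeeping is easy once the tree structure is in place, but it is another piece your sketch does not yet supply.
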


\begin{proof} 
We assume that $n=\sum_{i=1}^k k^i$ for some even $k$. (Otherwise, we can reduce to the largest $n'\leq n$ that is of this form, by having  $n-n'$ redundant elements.) Notice that \begin{equation}
    k \in \Theta\left(\frac{\log n}{\log\log n}\right),
    \label{eq:defk}
\end{equation} since for $k=\frac{\log n}{2\log\log n}$, it holds that $\sum_{i=1}^k k^i \leq k^{k+1} \leq n$, while for $k=\frac{2\log n}{\log\log n}$, it holds that $\sum_{i=1}^k k^i \geq k^{k} \geq n$ for large enough $n$. 
For every string $s$ of length between $1$ and $k$ where each character is in $[k]$, we define an element $e_s$. We denote by $s_j$ for $j \in [|s|]$ the $j$-th character of the string $s$, moreover, we denote by $s_{[j]}$ the prefix of $s$ of the first $j$ characters.
The set of elements $\elements$ is defined to be $\{e_s \mid s \in  \bigcup_{i=1}^k [k]^{i}\}$. Given a string $s$ and a character $j$ (respectively, another string $s'$), we denote by $sj$ (respectively, $ss'$) the string-concatenation of $j$ (respectively, $s'$) at the end of string $s$.
We say that an element $e_{sj}$ for a string $s$ and $j\in [k]$ is a child of element $e_s$, and that $e_s$ is the parent of $e_{sj}$. (Note that an element can have only one parent, but may have multiple children.) 
The value of all elements are drawn i.i.d. from the distribution  $\dist$ in which $v=1$ with probability $\frac{1}{k}$, and $v=0$ otherwise. Let $\dists \eqdef\{\dist \}_{e \in \elements}$. The feasibility constraint $\cf\eqdef \{S\subseteq \elements \mid \mbox{ for every } e_{s_1},e_{s_2}\in S, \mbox{ if } |s_1| \leq |s_2|, \mbox{ then }   s_1 \mbox{ is a prefix of } s_2 \}$ (in other words, only subsets of a single path from the root to one of the leafs are feasible).
The instance is then $\inst=(\elements,\dists,\cf)$.
It is sufficient to show that for some constant $c>0$, there is a distribution $\ordist$ over the arrival orders, in which the expected utility of every order-unaware algorithm $\alg_\inst$ is at most $c/k$ of the expected utility of the optimal order-aware algorithm.
I.e., 
\begin{equation}
\exists c \quad  \forall \alg_\inst \quad\quad\quad
\Ex_{\pi\sim \ordist}[\alg_\inst(\pi)] \leq  \frac{c}{k} \cdot  \Ex_{\pi\sim\ordist}[\opt_{\inst,\pi}(\pi)]. 
\label{eq:expected-opt}
\end{equation}
Equation~\eqref{eq:expected-opt} is sufficient since it shows that for every algorithm $\alg_\inst$ there exists an order $\pi^*$ (in the support of $\ordist$) in which $\alg_\inst(\pi^*) \leq  c/k \cdot  \opt_{\inst,\pi^*}(\pi^*)$, which together with Equation~\eqref{eq:defk} concludes the proof.

We now define the distribution $\ordist$ over the arrival orders. 
We first draw independently for every string $s$ of size between $0$ and $k-3$, a random subset of $[k]$ of size $k/2$, which we will denote by $r_s$. 
Then, the elements arrive in an arrival order defined by the following recursive formulas. We first define for every string $s$ of size between $0$ and $k-1$ and a parameter $i \in  [k-|s|]$:
$$ \pibad(s,i) \eqdef  (ss')_{s'\in [k]^i},$$
and 
$$ \pibad(s) \eqdef \pibad(s,k-|s|) \concat \ldots \concat\pibad(s,1).$$
We also denote given the random realizations 
$\{r_s\}_{s}$, for every string $s$ of size between $0$ to $k-3$ the arrival order 
$$ \pigood(s) \eqdef (s1,\ldots,sk) \concat \piind{s}{1}(s1)\concat\ldots\concat \piind{s}{j}(sj)\concat \ldots \concat\piind{s}{k}(sk),$$
and for $s$ such that $|s|=k-2$, $$\pigood(s) \eqdef (s1,\ldots,sk)\concat \pibad(s1)\concat\ldots\concat\pibad(sk). $$
The arrival order is then $\pigood(\epsilon)$.

For every element $e_s$, we say that $e_s$ is \textit{good}, if for every $j\in [\min(k-2,|s|)]$, it holds that $s_j \in r_{s_{[j-1]}}$, and \textit{bad} otherwise.
The order of arrival is illustrated in Figure~\ref{fig:downward}

We first bound from below the RHS of Equation~\eqref{eq:expected-opt}.
\begin{claim} For $c'=\frac{\sqrt{e}}{\sqrt{e}-1} $, it holds that
$   \Ex_{\pi\sim\ordist}[\opt_{\inst,\pi}(\pi)] \geq  k/c' $. \label{cl:aware}
\end{claim}
\begin{proof}
We prove this claim by showing that for every order $\pi$ in the support of $\ordist$, it holds that $\opt_{\inst,\pi}(\pi) \geq \frac{k}{c'}$.
Consider an order-aware algorithm (not necessarily $\opt_{\inst,\pi}$) that selects an element $e_s$ if (1) $e_{s}$ is feasible, (2)  $e_s$ is good, and (3) $v_{e_{s}}=1$ or $e_s$ is the last good element to arrive in the set $\{s_{[|s|-1]}j \mid j \in [k]\}$. 

By the description of the algorithm we know we will only select good elements, and we will select exactly one element from each layer (elements of strings with the same length). The algorithm receives a utility of $1$ from layer $j\in [k]$ if one of the good elements that are the children of the element chosen from layer $j-1$, has a value of $1$. (For elements of layer $1$, it is sufficient that one of the good elements, has a value of $1$.)
Thus, the expected utility of the algorithm is at least the number of layers, times the probability that one of the (at least) $k/2$ elements has a value of $1$. Therefore $$ \opt_{\inst,\pi}(\pi) \geq k\cdot (1-(1-\frac{1}{k})^{k/2}) \geq k\cdot (1-\frac{1}{\sqrt{e}}) = \frac{k}{c'},$$
which concludes the proof of the claim.
\end{proof}
\begin{figure}[H]
     \centering
     \includegraphics[width=0.77\textwidth]{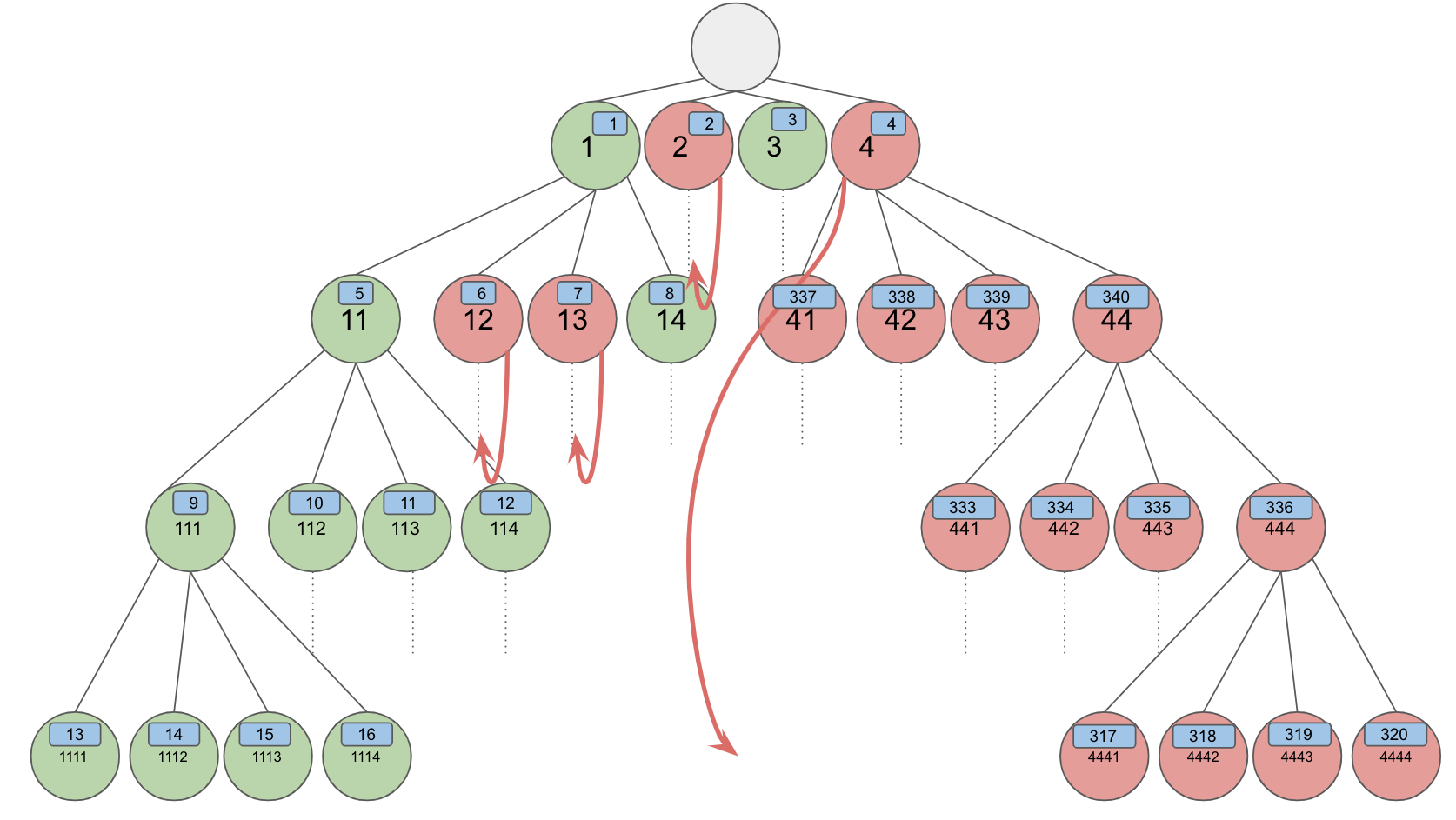}
    \caption{An example of an instance with $k=4$. In this example, there are $n=4+4^2+4^3+4^4 =340$ elements. The elements are partitioned into layers according to the structure of the feasibility constraint. A feasible set under this constraint is a subset of a path from the root to some leaf in the tree (excluding the root which is not an element). The numbers in the center of the circles represent the corresponding string-identity of the elements. 
 Green circles represent good elements, and red circles represent bad elements. For each layer up to the last two layers, all the children of bad elements are bad, and half of the children of good elements are good (and half of them are bad).
 For the last two layers, all the children of good elements are good, and all the children of bad elements are bad.
 In this example, the realizations of the random variables $\{r_s\}_s$ are such that $r_\epsilon = \{1,3\}$, and $r_1=\{1,4\}$. 
The numbers in the blue rectangles represent the arrival time of the element according to the arrival order.  The outgoing red arrows from bad elements that aren't children of bad elements, represent that after the arrival of this element, the next descendant among the sub-tree rooted at this element that is arriving according to the arrival order, is not a child of the element (as happens with good elements) but rather is a leaf, and the order of arrival of this sub-tree is bottom up.}
    \label{fig:downward}
\end{figure}

We next bound from above the LHS of Equation~\eqref{eq:expected-opt}.
\begin{claim}
For every (deterministic or randomized) order-unaware algorithm $\alg_\inst$, it holds that $$ \Ex_{\pi\sim \ordist}[\alg_\inst(\pi)] \leq 5.$$ \label{cl:oblivious}
\end{claim}
\begin{proof}
We analyze the performance of $\alg_\inst$ by partitioning into three types of contributions: (1) good elements, (2) bad elements that are either children of good elements or in the first layer, and (3) bad elements that are children of bad elements.

We first claim that the expected number of elements of type (1) that $\alg_\inst$ selects is at most $2$. 
To show this we can first observe that once a bad element is chosen, then good elements cannot be chosen anymore. After a bad element is chosen, the only elements that can be chosen are the offspring of this element (which are also bad by definition) and the ancestors of the element that haven't arrive yet (which all must be bad).
We next observe, that the algorithm can only select good elements in a strictly increasing order (in the length of their corresponding strings). 
Moreover, for every  element $e_s$ from layer $j$ for  $j\in [k-2]$, that is a child of a good element or is of layer 1, given the information that the algorithm has up to the arrival of element $e_s$, the probability of being good is exactly $1/2$.
This is since being good, by definition requires that (1) $e_s$ is not a child of a bad element (which the algorithm knows upon the arrival of $e_s$), and (2) $s_{|s|} \in r_{|s|-1}$, which happens with probability $1/2$.
Thus, each time the algorithm tries to select a good element from the first $k-2$ layers, it can no longer  select additional good elements with probability $1/2$. If the algorithm reaches layer $k-1$ without selecting a bad element, the algorithm can select at most two more good elements.
Therefore if the algorithm tries to ``gamble" and select $\ell$ good elements from the first $k-2$ layers, it selects in expectation at most $\frac{\ell+2}{2^\ell} + \sum_{i=1}^{\ell}  \frac{i-1}{2^i} \leq 2$ good elements from all $k$ layers\footnote{This argument also holds for randomized $\ell$.}.

Second, $\alg_\inst$ can choose at most one element of type (2). This is since in every feasible set, there is at most one such element.  (For every feasible set, only the element that corresponds to the shortest string among the bad ones can be of this type.)

Last, the expected utility of $\alg_\inst$ from elements of type (3) is at most $2$.
This is true since we can observe that once a bad element $e$ that is a child of a bad element is selected, the algorithm can only select elements that are ancestors of $e$.
Since there are less than $k$ such elements, and each can contribute a utility of at most $1/k$ in expectation, the expected utility of elements of this type is less than $2$. (Element $e$ contributes $1$,  and its ancestors contributes less than $k\cdot \frac{1}{k}$.) This concludes the proof.
\end{proof}

The theorem follows by combining Claims~\ref{cl:aware} and \ref{cl:oblivious}, with Equation~\eqref{eq:defk}.
\end{proof}

\section{Non-Downward Closed Feasibility Constraints}
In this section, we present an upper-bound on the order-competitive ratio of arbitrary (non-downward closed) feasibility constraints. This upper-bound holds even with respect to binary instances. This result is tight since achieving an order-competitive ratio of $\frac{1}{n}$ can be done trivially, by an algorithm that selects the set of elements with the maximum expected sum of values among all feasible sets.  Our result also improves the best-known upper bound of the competitive-ratio shown in \citep{DBLP:conf/stoc/Rubinstein16} of $O\left(\frac{1}{n}\right)$ to $\frac{1}{n}+o\left(\frac{1}{n}\right)$.

\begin{theorem}
For every constant $\xi> 1$,  there exists  $n_0 $ such that for every $n\geq n_0$, there exists an instance $\inst=(\elements,\dists,\cf)$ with $ n$ elements (i.e., $n=|\elements|$), in which for every order-unaware algorithm (deterministic or randomized) $\alg_\inst$, it holds that $     \rho(\inst, \alg_\inst) \leq  \frac{\xi}{n}$.  \label{thm:non-downward}
\end{theorem}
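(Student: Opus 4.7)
\textbf{The plan is} to construct a randomized family of arrival orders that embeds a uniformly random ``label'' $X^*\in\{0,1\}^{t_2}$ into the partition of type-$3$ elements between Phases~2 and 4 of a three-type instance, so that an order-aware algorithm can decode $X^*$ from the arrival order and capture essentially full value, while any order-unaware algorithm must commit to a guess $X\subseteq T_2$ before $X^*$ is revealed and, upon guessing wrong, is restricted by feasibility to a single viable type-$1$ element. Fix a small $\epsilon=\epsilon(\xi)>0$, let $t_2=\lceil(1+\epsilon)\log_2 n\rceil$, $t_3=O(\log n)$, and $m=n-t_2-t_3$; partition $\elements$ into three disjoint groups $T_1,T_2,T_3$ of sizes $m,t_2,t_3$. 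Each $e\in T_1$ independently takes value $1$ with probability $p=c/m$ for a small $c=c(\xi)>0$ and $0$ otherwise; $T_2\cup T_3$ are deterministically $0$. For each $X\subseteq T_2$ (identified with $\{0,1\}^{t_2}$) I assign a distinguished $t_2$-subset $B_X\subseteq T_3$, and for each $(X,e)$ a completion $Z(X,e)\subseteq B_X$, and declare
\[
\cf=\{\emptyset\}\cup\{X\cup\{e\}\cup Z(X,e)\,:\,X\subseteq T_2,\,e\in T_1\}.
\]
Drawing $X^*\sim\mathrm{Unif}(\{0,1\}^{t_2})$, the arrival order is Phase~1 ($T_2$)$\concat$Phase~2 ($T_3\setminus B_{X^*}$)$\concat$Phase~3 ($T_1$)$\concat$Phase~4 ($B_{X^*}$), with canonical orders within each phase so that Phase~1 is independent of $X^*$; call this distribution $\mathcal{D}$.

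The designs $\{B_X\}$ and $\{Z(X,e)\}$ will be constructed, via a careful combinatorial/algebraic design, to satisfy two properties: (a) \emph{spreading}, $|B_X\setminus B_{X'}|\ge \log_2 m+1$ for every distinct $X,X'$; and (b) \emph{restricted injectivity}, for every $X$ and every $X^*\neq X$ the map $e\mapsto Z(X,e)\setminus B_{X^*}$ is injective on $T_1$. Property (a) is a constant-weight packing: $2^{t_2}$ subsets of $[t_3]$ of weight $t_2$ with pairwise symmetric difference $\ge 2(\log_2 m+1)$. Property (b) asks, for each $X$, that the characteristic vectors of $\{Z(X,e)\}_{e\in T_1}$ in $2^{B_X}$, after projecting to the coordinate set $B_X\setminus B_{X^*}$, distinguish all $m$ elements of $T_1$ simultaneously for every valid $X^*\neq X$. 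The joint realization of (a) and (b) with $t_3=O(\log n)$ is the \emph{main combinatorial obstacle} of the proof and requires a structured design (rather than a generic probabilistic construction, which fails for (b)).

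Given the design, the analysis is a clean two-case split on the event $\{X=X^*\}$, where $X\subseteq T_2$ denotes $\alg$'s commitment by the end of Phase~1. By canonicality of Phase~1, $X$ is independent of $X^*$, so $\Pr[X=X^*]\le 2^{-t_2}$. Case $X=X^*$: $\alg$ can at best match the order-aware optimum, so its conditional expected value is at most $1-(1-p)^m$. Case $X\ne X^*$: feasibility forces $\alg$'s eventual type-$3$ selection to equal $Z(X,e)$ for the single $e\in T_1$ it picks in Phase~3; moreover, the subset $A'=Z(X,e)\setminus B_{X^*}$ is already committed by the end of Phase~2, before any type-$1$ value is revealed. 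By (a)+(b), at most one $e\in T_1$ can satisfy $Z(X,e)\setminus B_{X^*}=A'$, limiting $\alg$'s conditional expected reward to $p$. Therefore
\[
\Ex_{\pi\sim\mathcal{D}}[\alg_\inst(\pi)]\le 2^{-t_2}\bigl(1-(1-p)^m\bigr)+p,\qquad \Ex_{\pi\sim\mathcal{D}}[\opt_{\inst,\pi}(\pi)]\ge 1-(1-p)^m,
\]
so the expected ratio is at most $2^{-t_2}+p/(1-(1-p)^m)$. Since $p/(1-(1-p)^m)\to c/(m(1-e^{-c}))$ and $c/(1-e^{-c})\to 1^+$ as $c\to 0$, we may choose $c$ small enough that $c/(1-e^{-c})<\xi$ and then $n_0$ large enough that for $n\ge n_0$ the bound above is $\le \xi/n$; a standard averaging over the support of $\mathcal{D}$ then extracts a single arrival order $\pi$ certifying $\rho(\inst,\alg)\le \xi/n$.
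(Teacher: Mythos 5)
Your high-level plan matches the paper's: three types of elements, a four-phase arrival order, a hidden subset of type-2 elements decodable from the order, and feasibility forcing a Phase-2 commitment that pins the algorithm to a single type-1 element when it guesses wrong. Your expected-value accounting, the bound $\rho \le 2^{-t_2} + p/(1-(1-p)^m)$, and the choice of small $c$ to drive $c/(1-e^{-c})$ below $\xi$ are all sound \emph{conditional on the existence of the design}. But that design is precisely where the argument is incomplete, and you say so yourself: you defer properties (a) (spreading) and (b) (restricted injectivity) to an unspecified ``careful combinatorial/algebraic design,'' and you correctly observe that a generic probabilistic construction fails for (b). That is not a minor detail; it is the entire content of the theorem. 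Property (b) is in fact extremely demanding at your parameters. A sufficient condition for (b) is a binary code on $|B_X|=t_2\approx(1+\epsilon)\log_2 n$ coordinates with $m\approx n$ codewords and minimum distance exceeding the maximal overlap $|B_X\cap B_{X^*}|\approx t_2-\log_2 m$. That forces rate $\approx 1/(1+\epsilon)$ and relative distance $\approx \epsilon/(1+\epsilon)$, i.e.\ $R+\delta\approx 1$, the Singleton bound, which no nontrivial binary code attains; and a direct union bound over the $\Theta(2^{t_2}\cdot m^2)$ triples $(X^*,e,e')$ with per-triple failure probability $\approx 1/(2m)$ gives an expected failure count of order $n^{2+\epsilon}$, so the probabilistic method indeed gives nothing. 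Property (b) might still hold because it only needs to cover the specific sets $B_X\cap B_{X^*}$, but you give no construction, and this is exactly where a proof can silently fail.

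The paper avoids the problem entirely by \emph{not} insisting that $|T_3|=O(\log n)$. It takes $|C|=\sqrt n$, $|A|=2\log_2 n$, $|B|=n-\sqrt n-2\log_2 n$, values $\mathrm{Ber}(1/n^2)$ on $B$, and constructs the $2^{|A|}=n^2$ sets $U_i\subseteq C$ by a direct probabilistic argument (Claim~\ref{cl:ui}) with only three easy-to-verify properties: $|U_i|=\Theta(\log n)$, bounded element-degree, and pairwise intersection at most a fixed constant ($\le 10$). With these, a wrong guess $V_{i'}$ almost always has $U_i\cap U_{i'}=\emptyset$, in which case the algorithm's Phase-2 commitment plus injectivity of $f_{i'}$ leaves exactly one viable $b_j$; the remaining $i$'s (those with $U_i\cap U_{i'}\neq\emptyset$) number only $o(n^2)$ and contribute at most a $2^{10}/n^2$ term, which washes out. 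No strong injectivity under all projections is needed, and the whole construction is a short Chernoff-plus-union-bound computation. The lesson is that your instinct to keep the instance ``small'' (type-3 of size $O(\log n)$) is counterproductive: a polynomially large $T_3$ makes the intersections essentially vanish and renders the design trivial. If you swap your $t_3=O(\log n)$ for $t_3=\sqrt n$ and drop property (b) in favor of the paper's near-disjointness argument, your analysis closes and gives the same $\frac{1}{n}+o(\frac1n)$ bound. As written, however, the proof has a genuine gap at the design stage.
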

\begin{proof} 
We prove that theorem by presenting a construction with $n$ elements, for which no order-unaware algorithm can have an order-competitive ratio of more than $\frac{1}{n}+o(\frac{1}{n})$.
We assume for simplicity that $n= 2^{2x} $ for some integer $x$.
Consider an instance $\inst=(\elements,\dists,\cf)$ in which $\elements = A \cup B \cup C$, where $A=\{a_1,\ldots,a_{k_1}\},$ $B=\{b_1,\ldots,b_{k_2}\}$, and $C=\{c_1\ldots,c_{k_3}\}$ where $k_1=4x$, $k_2=n-\sqrt{n}-4x$, and $k_3=\sqrt{n}$ which sum up to $n$.
The values of all elements in $A \cup C $ are deterministically $0$.
The values of all elements in $B  $ are  $1$ with probability $\frac{1}{n^2}$ and $0$ otherwise.
Let $U_1,\ldots,U_{2^{k_1}}$ be subsets of $C$ which satisfy the conditions from the following claim:
\begin{claim}\label{cl:ui}
There exists $n'_0$ such that  for every $n\geq n'_0$, there exist sets $U_1,\ldots,U_{2^{k_1}}$ such that:
\begin{itemize}
    \item  For all $i\in [2^{k_1}]$, $\log(n') \leq |U_i|\leq 21\cdot \log(n')$.
    \item  For each $j\in [k_3]$, it holds that $|\{i \mid c_j \in U_i\}| \leq  \frac{22\cdot 2^{k_1} \cdot \log(n)}{k_3}$.
    \item  For all $i_1,i_2 \in [2^{k_1}]$ such that $i_1 \neq i_2$, it holds that $|U_{i_1} \cap U_{i_2} | \leq  10$.
\end{itemize}
\end{claim}
\begin{proof}
We prove existence by the probabilistic method. For simplicity of presentation, let $\alpha=10$.
Consider a series of  random variables $X_{ij}$ that indicate whether $c_j\in U_i$, which are drawn independently according to $Ber\left(\frac{(\alpha+1)\cdot \log(n)}{k_3}\right)$. 
Note that for the parameter $\alpha$  and for $n\geq 2^{16}$ this probability is guaranteed to be in $[0,1]$. Let $E^1_i$ be the event that $|U_i| < \log(n)$ or $|U_i| > (2\alpha+1)\cdot \log(n)$ (which is equivalent to $|\sum_{j} X_{ij}- (\alpha+1)\cdot\log(n)|>\alpha\cdot \log(n)$), let $E^2_j$ be the event that $ |\{i \mid c_j \in U_i\}| >  \frac{2(\alpha+1)\cdot 2^{k_1} \cdot \log(n)}{k_3}$ (which is equivalent to $\sum_{i} X_{ij} > \frac{2\cdot(\alpha+1)\cdot 2^{k_1} \cdot \log(n)}{k_3}$), let $E^3_{i_1,i_2}$ be the event that $|U_{i_1}\cap U_{i_2}| > \alpha$ (which is equivalent to $\sum_j X_{i_1 j} \cdot X_{i_2 j } >\alpha$), and let $E$ be the event that one of the formerly defined events occurs, i.e., $E=\left(\bigvee_{i} E^1_i \vee \bigvee_j E^2_j \vee \bigvee_{i_1\neq i_2} E^3_{i_1,i_2}\right)$.
For every $i\in[2^{k_1}]$, it holds that $$\Pr[ E^1_i]  = \Pr\left[|Bin\left(k_3,\frac{(\alpha+1)\cdot\log(n)}{k_3}\right) - (\alpha+1)\cdot\log(n)|> \alpha \cdot \log(n)\right] \leq \frac{2}{n^3},$$
where the inequality is by Chernoff bound. For every $j\in [k_3]$ it holds that  $$\Pr[ E^2_j]  = \Pr\left[Bin\left(2^{k_1},\frac{(\alpha+1)\cdot\log(n)}{k_3}\right) > \frac{2 (\alpha+1)\cdot 2^{k_1}\cdot \log(n)}{k_3}\right]\leq \frac{1}{n^3},$$
where the inequality is by Chernoff Bound. For all $i_1,i_2 \in [2^{k_1}]$, such that $i_1\neq i_2$ it holds that
\begin{eqnarray*}
\Pr[ E^3_{i_1,i_2}]  &= & \Pr\left[Bin\left(k_3,\frac{(\alpha+1)^2\cdot\log^2(n)}{k_3^2}\right) > \alpha \right] \\ 
& \leq & {k_3\choose \alpha+1}  \cdot \left(\frac{(\alpha+1)^2\cdot\log^2(n)}{k_3^2}\right)^{\alpha+1}  \leq   \frac{1}{n^5},
\end{eqnarray*}
where the first inequality holds by the union bound, and the second inequality holds for large enough $n$ (for $n>2^{1000}$).
Thus, by the union bound, the probability that one of the events occurs is $$\Pr[E] \leq 2^{k_1} \cdot \frac{2}{n^3} + k_3 \cdot \frac{1}{n^3} + 2^{2k_1} \cdot \frac{1}{n^5} < 1.$$
Thus, there exist realizations of all $X_{ij}$ in which event $E$ does not occur, which implies the claim.
\end{proof}

Next, we name the subsets of $A$ as $V_1,\ldots,V_{2^{k_1}}$, and  we define $2^{k_1}$ corresponding functions. For each $i\in [2^{k_1}]$, we define an arbitrary injective function  $f_i : [k_2]\rightarrow 2^{U_i}$ (such a function exists since $|U_i| \geq \log(n)\geq \log(k_2)$).
We now define the feasibility constraint $$\cf\eqdef \left\{S \mid \exists i,j \mbox{ such that } S \cap A =V_i~ \wedge ~S\cap B=\{b_j\}   ~\wedge~ S \cap C=f_i(j) \right\}.$$

For every $i$, let $\pi_{i}$ be the arrival order in which the elements arrive in four phases (within each phase, the order can be arbitrary but during the first phase the order should be consistent for all $\pi_i$). Phase~1 is composed of all elements of $A$. Phase~2 is composed of all elements of $C \setminus U_{i}$.
Phase~3 is composed of all elements in $B$, and Phase~4 is composed of all elements in $U_{i}$, i.e.,
$$\pi_i= \left(\underbrace{A}_{\mbox{Phase 1}}, \underbrace{C \setminus U_i}_{\mbox{Phase 2}},\underbrace{B}_{\mbox{Phase 3}},\underbrace{ U_i}_{\mbox{Phase 4}}\right).$$

We next bound from below for every $\pi_i$ the performance of $\opt_{\inst,\pi_i}$ on $\pi_i$.
\begin{claim}\label{cl:opt-non}
For every $\pi_i$, it holds that $\opt_{\inst,\pi_i}(\pi_i) \geq \frac{1}{n}-o\left(\frac{1}{n}\right)$.
\end{claim}
\begin{proof}
Consider the order-aware algorithm, that selects in Phase~1 the subset $V_i$ of $A$. Then in Phase~2 it selects nothing. In Phase~3 it selects the first element $b_j$ of $B$ that its value is $1$ (or the last element of Phase~3, if all of them have values of $0$). In   Phase~4, the algorithm selects the subset $f_i(j)$ of $C$. This is always a feasible set.
The value of this set is $1$, if one of the elements in $B$ has a non-zero value.
The claim then holds since this happens with probability $1-(1-\frac{1}{n^2})^{k_2}  =\frac{1}{n} - o\left(\frac{1}{n}\right)$.
\end{proof}
In order to bound the performance of a randomized algorithm $\alg_\inst$, it is sufficient by Yao's principle to define a distribution $D_\pi$ over arrival orders, and bound the performance of the best deterministic algorithm on the randomized distribution.
Consider the distribution $D_\pi$, where the order $\pi\sim D_\pi$ is $\pi_i $ with probability $\frac{1}{2^{k_1}}$ for every $i\in [2^{k_1}]$.
We next bound from above the performance of any deterministic algorithm $\alg_\inst$.
\begin{claim}\label{cl:alg-non}
For every deterministic algorithm $\alg_\inst$, it holds that $ E_{\pi\sim D _\pi}[ \alg_\inst(\pi)] \leq \frac{1}{n^2} +o\left(\frac{1}{n^2}\right)$.
\end{claim}
\begin{proof}
Let $\alg_\inst$ be an arbitrary deterministic algorithm, then since in Phase~1, the order is constant, $\alg_\inst$ selects deterministically a set $V_{i'} \subseteq A$. We next analyze the performance of $\alg_\inst$ depending on the realized arrival order $\pi_i$. Let $G_{i'} = \{ \pi_i \mid U_i \cap U_{i'}  \neq \emptyset \wedge \pi_i \neq \pi_{i'}\}$.
For every $\pi_i \in G_{i'} $, by Claim~\ref{cl:ui} it holds that $|U_i\cap U_{i'}| \leq 10$, then by the end of Phase~2, $\alg_\inst$ selected a subset of $U_{i'} \setminus U_i$. Since there are at most $10$ elements in $U_{i'} \cap U_i$ that didn't arrive  by the end of Phase~2, there are at most 
$2^{10}$ elements in $B$ that $\alg_\inst$ can select that  lead to a subset of a feasible set. Thus, it holds that $\alg_\inst (\pi_i) 
\leq \frac{2^{10}}{n^2}$.
For the order of arrival $\pi_{i'}$, it holds that $\alg_\inst (\pi_{i'}) \leq 1-(1-\frac{1}{n^2})^{k_2} \leq \frac{1}{n}$.
Otherwise (for every $\pi_i \neq \pi_{i'}$ such that $\pi_i \notin G_{i'}$), it holds that $U_{i} \cap U_{i'} =\emptyset$, and therefore by the end of Phase~2, there is only one element that $\alg_\inst$ can select which leads to a subset of a feasible set. Thus, $\alg_\inst (\pi_{i}) = \frac{1}{n^2}$.
The set $G_{i'} $ is at most of size $ \sum_{c_j \in U_{i'}} |\{ i \mid c_j \in U_i\} | \leq  21 \cdot\log(n) \cdot \frac{22 \cdot 2^{k_1} \cdot \log(n)}{k_3} = o\left(n^2\right)$, where the inequality is by Claim~\ref{cl:ui}. 
Thus, it holds that $E_{\pi\sim D_\pi}[\alg_{\inst}(\pi)] \leq \frac{1}{2^{k_1}} \cdot \frac{1}{n} + \frac{|G_{i'}|}{2^{k_1}} \cdot \frac{2^{10}}{n^2}  + \frac{2^{k_1}-1-|G_{i'}|}{2^{k_1}} \cdot \frac{1}{n^2}  = \frac{1}{n^2} + o\left(\frac{1}{n^2}\right)$.
\end{proof}
Thus, by combining Claims~\ref{cl:opt-non}, and \ref{cl:alg-non} with Yao's principle, we get that for every (deterministic or randomized) algorithm  $\alg_\inst$, there exists an arrival order $\pi_i$ such that $$\frac{\alg_\inst(\pi_i)}{\opt_{\inst.\pi_i}(\pi_i)} \leq  \frac{1}{n} + o\left(\frac{1}{n}\right),  $$
which concludes the proof.
\end{proof}


\section{$k$-Uniform Matroid}
In this section we show that for the $k$-uniform feasibility constraint there is an instance in which the order-competitive ratio is $1- \frac{1}{\Theta(\sqrt{{k}})}$, which is approaching $1$ at the same rate (up to a constant) as the competitive-ratio (with respect to the prophet benchmark) for this feasibility constraint \citep{DBLP:journals/siamcomp/Alaei14,hajiaghayi2007automated}.
\begin{theorem}\label{thm:kunit}
There is a constant $c>0$ such that for every $k$, there is an instance $\inst=(\elements,\dists,\cf = \{S \subseteq \elements \mid |S| \leq k \})$ in which  
for every order-unaware algorithm $\alg_\inst$ it holds that $$\rho(\inst,\alg_\inst) \leq 1- \frac{c}{\sqrt{{k}}}.$$
\end{theorem}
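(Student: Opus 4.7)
The plan is to exhibit an instance with three types of elements and two possible arrival orders that share a common deterministic first phase, so that any order-unaware algorithm is forced to commit its phase-1 budget before it learns which continuation it will face; the two continuations will turn out to have optimal commitments $\Theta(\sqrt{k})$ apart.

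\textbf{Construction and reduction.} Fix constants $0 < v_1 < v_2 < 1$ to be tuned (concretely, $v_1 = 1/4,\ v_2 = 3/4$ already works). Partition $\elements = A_1 \cup A_2 \cup A_3$ with $|A_t| = k$; type-2 elements are deterministically $v_2$, type-1 elements are deterministically $v_1$, and type-3 elements equal $1$ w.p.\ $1/2$ and $0$ otherwise. The feasibility is $\cf = \{S \subseteq \elements : |S|\le k\}$. The good order is $(A_2, A_3, A_1)$ and the bad order is $(A_2, A_1, A_3)$ (the within-type order is immaterial since elements of the same type are exchangeable). Because phase~1 is identical and deterministic in both orders, any (possibly randomized) order-unaware algorithm's phase-1 behavior is summarized by a distribution over $\ell \in \{0,1,\dots,k\}$, the number of type-2 elements it keeps. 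From phase~2 onward the algorithm can identify which order is realized and play the optimal continuation, giving
\[
f_G(\ell) \;=\; \ell v_2 + v_1(k-\ell) + (1-v_1)\,\Ex[\min(k-\ell, X)],
\]
\[
f_B(\ell) \;=\; \ell v_2 + \max_{m\ge 0}\bigl(m v_1 + \Ex[\min(k-\ell-m,X)]\bigr),
\]
where $X \sim Bin(k,1/2)$. Both functions are concave in $\ell$.

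\textbf{Optimal commitments are $\sqrt{k}$ apart.} Using $\tfrac{d}{dj}\Ex[\min(j,X)] = \Pr[X\ge j]$, the first-order optimality conditions are $\Pr[X \ge k-\ell_G^*] = (v_2-v_1)/(1-v_1)$ and $\Pr[X \ge k-\ell_B^*] = v_2$. Since $v_1 > 0$, the right-hand side of the first equation is strictly smaller than $v_2$, so $\ell_G^* < \ell_B^*$; Chernoff-type estimates on the quantile function of $Bin(k,1/2)$ (mean $k/2$, standard deviation $\sqrt{k}/2$) then give $\ell_B^* - \ell_G^* = \Theta(\sqrt{k})$, while $f_G(\ell_G^*)$ and $f_B(\ell_B^*)$ are both $\Theta(k)$.

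\textbf{Yao and the compromise loss.} By Yao's principle, the order-competitive ratio is at most the ratio of the best-deterministic expectation to the best order-aware expectation under the uniform distribution over the good and bad orders. The latter equals $(f_G(\ell_G^*)+f_B(\ell_B^*))/2 = \Theta(k)$, while the former is $(f_G(\ell)+f_B(\ell))/2$ for some fixed $\ell$. Using $\Pr[X=j] = \Theta(1/\sqrt{k})$ for $j=k/2 \pm O(\sqrt{k})$, one checks that $|f_G''|,\ |f_B''| = \Theta(1/\sqrt{k})$ in a window of width $\Theta(\sqrt{k})$ around the respective optima, so a displacement of $\sqrt{k}$ away from either optimum costs $\Theta(\sqrt{k})$ in value; for $\ell$ outside this window, cruder linear bounds give even larger $\Theta(k)$ losses. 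Since $|\ell_G^* - \ell_B^*| = \Omega(\sqrt{k})$, every $\ell$ is $\Omega(\sqrt{k})$-far from at least one of the two optima, so $(f_G(\ell_G^*)-f_G(\ell))+(f_B(\ell_B^*)-f_B(\ell)) = \Omega(\sqrt{k})$; dividing by the total $\Theta(k)$ yields $\rho(\inst,\alg_\inst)\le 1 - \Omega(1/\sqrt{k})$.

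\textbf{Main obstacle.} The main technical difficulty is making the normal-approximation intuition rigorous at the $\sqrt{k}$ scale using only Chernoff-type bounds: one must pin down the quantiles of $Bin(k,1/2)$ to $O(\sqrt{k})$ accuracy and estimate the point masses near $k/2$ to within constant factors. One must also carefully handle the piecewise nature of $f_B$ (linear on the side of the threshold where the inner optimum $m^*$ hits the boundary $m=0$, concave on the other side) and verify that the summed-loss inequality survives integration over the algorithm's distribution on $\ell$, which is needed for the randomized case.
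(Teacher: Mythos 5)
Your construction is essentially the paper's: three element types with deterministic values at the low and mid levels and a Bernoulli type at the high level, arranged so that both orders share a deterministic first phase and differ only in whether the low-value or high-value block comes second, forcing the algorithm to commit its phase-1 budget $\ell$ before learning the order. The paper uses $(7/4,1,0/2)$ with $|C|=2k$; you use $(3/4,1/4,0/1)$ with $|A_3|=k$, which is cosmetic. The high-level idea is identical.

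What differs is the analysis. The paper does not invoke Yao's principle for this theorem; it analyzes the (possibly randomized) algorithm directly by defining the normalized commitment $Y$, choosing a single threshold $d=0.913$, splitting on $p=\Pr[Y>d]$, and bounding the algorithm on $\pi_2$ if $p\ge 1/2$ and on $\pi_1$ otherwise, with explicit numerical constants obtained from the central limit theorem (Claims~\ref{cl:opt1}--\ref{cl:alg2}). You instead pass to the uniform distribution over the two orders via Yao's principle, write the continuation values as concave functions $f_G,f_B$ of the commitment $\ell$, derive the first-order conditions $\Pr[X\ge k-\ell_G^*]=(v_2-v_1)/(1-v_1)$ and $\Pr[X\ge k-\ell_B^*]=v_2$, and argue via $\Theta(\sqrt{k})$ separation of the two argmaxes together with second-derivative estimates $|f''|=\Theta(1/\sqrt{k})$ that every fixed $\ell$ incurs total loss $\Omega(\sqrt{k})$; the pointwise inequality then survives averaging over the randomized $\ell$. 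Your framing is cleaner and makes more transparent \emph{why} the separation and the loss are both on the $\sqrt{k}$ scale, whereas the paper's is more computational with concrete constants. Both proofs leave the normal-approximation step (binomial quantiles and point masses near $k/2$) at the same level of informality, so you are not introducing any new gap there. One small note on your last bullet: the ``survives integration'' point is immediate once you have the pointwise bound $(f_G(\ell_G^*)-f_G(\ell))+(f_B(\ell_B^*)-f_B(\ell))=\Omega(\sqrt{k})$ for every $\ell$; no extra care is needed for the randomized case beyond linearity of expectation, so this is not a genuine obstacle. Also be slightly careful that the curvature bound $|f_B''|=\Theta(1/\sqrt{k})$ holds only on the side of the kink $\ell=k-j^*$ where the inner $m^*=0$; since both $\ell_G^*$ and $\ell_B^*$ sit to the right of this kink (as $(v_2-v_1)/(1-v_1)>v_1$ forces), and to the left of the kink you have the linear slope $v_2-v_1>0$, the pieces join up correctly, but this should be said explicitly when formalizing.
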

\begin{proof}
Consider an instance $\inst=(\elements,\dists,\cf)$ in which $\elements = \{a_1,\ldots,a_k,b_1,\ldots,b_k,c_1\ldots,c_{2k}\}$.
The value of each element $a_i$ is deterministically $7/4$, of each element $b_i$ is deterministically $1$, and of each element $c_i$ is  either $0$ or $2$ each with probability half.

Consider the following two orders:
\begin{itemize}
    \item $\pi_1\eqdef (a_1,\ldots,a_k,b_1,\ldots,b_k,c_1\ldots,c_{2k})$
    \item $\pi_2\eqdef (a_1,\ldots,a_k,c_1\ldots,c_{2k},b_1,\ldots,b_k)$
\end{itemize} 
We first define a few notation to show an upper bound on the order-competitive ratio of this instance.
Let $X$ be the random variable of the number of non-zero values of elements $c_1,\ldots,c_{2k}$, and let $Z = \frac{k-X}{\sqrt{k/2}}$ (thus $X=k-\sqrt{\frac{k}{2}} \cdot  Z $).
We now lower bound  $\opt_{\inst,\pi_1}(\pi_1)$ and $\opt_{\inst,\pi_2}(\pi_2)$.
\begin{claim}
It holds that $$\opt_{\inst,\pi_1}(\pi_1) \geq 2k -  0.291 \sqrt{k}.$$ \label{cl:opt1}
\end{claim}
\begin{proof}
Consider an algorithm $\alg$ that selects $d \cdot \sqrt{k/2}$ elements among $\{a_1,\ldots,a_k\}$ for $d=1.152$, $0$ elements among $\{b_1,\ldots,b_k\}$, and all elements in $\{c_1,\ldots,c_{2k}\}$ with a value of $2$, as long as capacity allows.
It holds that 
\begin{eqnarray}
\opt_{\inst,\pi_1}(\pi_1) & \geq & 
\alg(\pi_1) \nonumber\\ 
& =  &\Ex\left[d \cdot \sqrt{\frac{k}{2}} \cdot \frac{7}{4} +  \min(k-d \cdot \sqrt{\frac{k}{2}},X)\cdot 2 \right]  \nonumber \\ 
& =  &\Ex\left[d \cdot \sqrt{\frac{k}{2}} \cdot \frac{7}{4} +  \min(k-d \cdot  \sqrt{\frac{k}{2}},k-\sqrt{\frac{k}{2}} \cdot  Z  )\cdot 2\right] \nonumber \\ 
& = & \Ex\left[2k-\sqrt{2k}\cdot \left(  \max(d , Z  )- d  \cdot \frac{7}{8}   \right)\right] \nonumber \\ 
& = &  2k- \Pr[Z < d] \cdot \sqrt{2k}\cdot   \frac{d}{8}  - Pr[Z\geq d ] \cdot \sqrt{2k}\cdot \Ex\left[  Z - d  \cdot \frac{7}{8}    \mid  Z \geq d \right] 
\nonumber \\ 
& \gtrsim & 2k - 0.291 \sqrt{k}, \nonumber
\end{eqnarray}
where the approximation holds since for large enough $k$, by the central limit theorem, $Z$ is approximately distributed like $N(0,1)$, and thus the result holds by the choice of  the value of $d$. 
\end{proof}

\begin{claim}
It holds that $$\opt_{\inst,\pi_2}(\pi_2) \geq 2k -  0.224 \sqrt{k}.$$ \label{cl:opt2}
\end{claim}

\begin{proof}
Consider an algorithm $\alg$ that selects $d \cdot \sqrt{k/2}$ elements among $\{a_1,\ldots,a_k\}$ for $d=0.674$, all elements in $\{c_1,\ldots,c_{2k}\}$ with a value of $2$, as long as capacity allows, and all elements among $\{b_1,\ldots,b_k\}$, as long as capacity allows.
It holds that 
\begin{eqnarray}
\opt_{\inst,\pi_2}(\pi_2) & \geq & 
\alg(\pi_2) \nonumber\\ 
& = &\Ex\left[d \cdot \sqrt{\frac{k}{2}} \cdot \frac{7}{4} +  \min(k-d \cdot \sqrt{\frac{k}{2}},X)\cdot 2 + k - d \cdot \sqrt{\frac{k}{2}} - \min(k-d \cdot \sqrt{\frac{k}{2}},X) \right]  \nonumber \\ 
& = &\Ex\left[d \cdot \sqrt{\frac{k}{2}} \cdot \frac{3}{4} +  \min(k-d \cdot  \sqrt{\frac{k}{2}},k-\sqrt{\frac{k}{2}} \cdot  Z  ) + k\right] \nonumber \\ 
& = & \Ex\left[2k-\sqrt{\frac{k}{2}}\cdot \left(  \max(d , Z  )- d  \cdot \frac{3}{4}   \right)\right] \nonumber \\ 
& = &  2k- \Pr[Z < d] \cdot \sqrt{\frac{k}{2}}\cdot   \frac{d}{4}  - Pr[Z\geq d ] \cdot \sqrt{\frac{k}{2}}\cdot \Ex\left[  Z - d  \cdot \frac{3}{4}   \mid  Z \geq d \right] 
\nonumber \\ 
& \gtrsim & 2k - 0.224 \sqrt{k}, \nonumber
\end{eqnarray}
where the approximation holds since for large enough $k$, by the central limit theorem, $Z$ is approximately distributed like $N(0,1)$, and thus the result holds by the choice of  the value of $d$. 
\end{proof}

Let $\alg_\inst$ be an arbitrary order-unaware (possibly randomized) algorithm.
Let $Y$ be the random variable that indicates the number of elements $\alg_\inst$ selects among $\{a_1,\ldots,a_k\}$ divided by $\sqrt{k/2}$.
Note that since elements $a_1,\ldots,a_k$ arrive first, $Y$ is independent on $X$.
Now for $d=0.913$ and $p=\Pr[Y > d  ]$ consider two cases: (1) $p \geq \frac{1}{2}$, and (2) $p < \frac{1}{2}$.

In case (1),  we bound the performance of $\alg_\inst$ in the case of arrival order $\pi_2$ (see Claim~\ref{cl:alg1}).
In case (2),  we bound the performance of $\alg_\inst$ in the case of arrival order $\pi_1$ (see Claim~\ref{cl:alg2}).
\begin{claim}
If $p \geq \frac{1}{2}$ then   $$\alg_{\inst} (\pi_2) \leq \frac{1}{2} \cdot \opt_{\inst,\pi_2}(\pi_2) + k -  0.115 \sqrt{k}.$$ \label{cl:alg1}
\end{claim}
\begin{proof}
It holds that
\begin{eqnarray}
\alg_{\inst} (\pi_2) & = & \Ex\left[ \alg_{\inst} (\pi_2) \mid Y > d \right] \cdot p + \Ex\left[ \alg_{\inst} (\pi_2) \mid Y \leq d  \right] \cdot \left(1 - p \right) \nonumber\\
& \leq & \frac{1}{2} \cdot \opt_{\inst,\pi_2}(\pi_2) + \frac{1}{2} \cdot \Ex\left[ \alg_{\inst} (\pi_2) \mid Y > d \right], \label{eq:alg11}
\end{eqnarray}
where the inequality is since the  algorithm conditioned on the value of $Y$, cannot obtain more than $\opt_{\inst,\pi_2}(\pi_2)$, and since $p\geq \frac{1}{2}$.
We also have that 
\begin{eqnarray}
\Ex\left[ \alg_{\inst} (\pi_2) \mid Y > d  \right] 
& \leq & k + \Ex\left[ \sqrt{\frac{k}{2}} \cdot Y \cdot \frac{3}{4} +   \min(k-\sqrt{\frac{k}{2}} \cdot Y   ,X)\mid Y > d  \right] \nonumber\\
& = &  2k-\sqrt{\frac{k}{2}}\cdot\Ex\left[ \max( Y , Z  )- Y  \cdot \frac{3}{4}    \mid Y > d  \right] \nonumber \\ 
& \leq  &  2k-\sqrt{\frac{k}{2}}\cdot\Ex\left[ \max( d , Z  )- d  \cdot \frac{3}{4}     \right] \nonumber \\ 
& = & 2k - \Pr[Z < d] \cdot \sqrt{\frac{k}{2}}\cdot   \frac{d}{4}  -   Pr[Z\geq d ] \cdot \sqrt{\frac{k}{2}}\cdot \Ex\left[  Z - d  \cdot \frac{3}{4}   \mid  Z \geq d \right] 
\nonumber \\ 
& \lesssim &  2k - 0.231 \sqrt{k},  \label{eq:alg12}
\end{eqnarray}
where the first inequality is since  the   value obtained by the algorithm can be bounded in the following way: first the algorithm receives $1$ for each selected box, it then receives  an additional term of $\frac{3}{4}$ for each selected box in $\{a_1,\ldots,a_k\}$, and an additional term of $1$ for each selected box in $\{c_1,\ldots,c_{2k}\}$ with a value of $2$.
The first equality holds by rearranging and replacing $X$ by $k-\sqrt{\frac{k}{2}} \cdot  Z $. The second inequality holds since the function $f(x)= \Ex\left[ \max( x , Z  )- x  \cdot \frac{3}{4} \right] $ is an increasing function in $x$ for $x>d$. The last inequality holds for large enough $k$  by the central limit theorem.
Combining Equations~\eqref{eq:alg11} and \eqref{eq:alg12} concludes the proof.
\end{proof}

\begin{claim}
If $p < \frac{1}{2}$ then   $$\alg_{\inst} (\pi_1) \leq \frac{1}{2} \cdot \opt_{\inst,\pi_1}(\pi_1) + k -  0.150 \sqrt{k}.$$ \label{cl:alg2}
\end{claim}
\begin{proof}
It holds that
\begin{eqnarray}
\alg_{\inst} (\pi_1) & = & \Ex\left[ \alg_{\inst} (\pi_1) \mid Y > d \right] \cdot p + \Ex\left[ \alg_{\inst} (\pi_1) \mid Y \leq d  \right] \cdot \left(1 - p \right) \nonumber\\
& \leq & \frac{1}{2} \cdot \opt_{\inst,\pi_1}(\pi_1) + \frac{1}{2} \cdot \Ex\left[ \alg_{\inst} (\pi_1) \mid Y \leq d \right], \label{eq:alg21}
\end{eqnarray}
where the inequality is since the  algorithm conditioned on the value of $Y$, cannot obtain more than $\opt_{\inst,\pi_1}(\pi_1)$, and since $p < \frac{1}{2}$.
We are now going to bound $\Ex\left[ \alg_{\inst} (\pi_1) \mid Y \leq d \right]$. To do so, we observe that the optimal online algorithm  that already selected $\sqrt{\frac{k}{2}} \cdot Y$ elements among $\{a_1,\ldots,a_k\}$  and knows that the arrival order is $\pi_1$, is a deterministic algorithm. Moreover, the optimal algorithm never selects elements among $\{b_1,\ldots,b_k\}$. This is since selecting such element  increases the algorithm's value  by $1$ when $Z > Y $, but decreases the algorithm's value by $1$, when $Z \leq Y$. It follows then by the fact that the probability that $Z > Y$ for every non-negative $Y$ is at most $\frac{1}{2}$.
Therefore, 
\begin{eqnarray}
\Ex\left[ \alg_{\inst} (\pi_1) \mid Y \leq d  \right] 
& \leq & \Ex\left[ \sqrt{\frac{k}{2}} \cdot Y \cdot \frac{7}{4} +  \min(k-\sqrt{\frac{k}{2}} \cdot Y   ,X) \cdot 2 \mid Y \leq d  \right] \nonumber\\
& = &  2k-\sqrt{2k}\cdot\Ex\left[ \max( Y, Z  )- Y  \cdot \frac{7}{8}    \mid Y \leq d  \right] \nonumber \\ 
& \leq  &  2k-\sqrt{2k}\cdot\Ex\left[ \max( d  , Z  )- d  \cdot \frac{7}{8}     \right] \nonumber \\ 
& = & 2k - \Pr[Z < d] \cdot \sqrt{2k}\cdot   \frac{d}{8}  -  Pr[Z\geq d ] \cdot \sqrt{2k}\cdot \Ex\left[  Z - d  \cdot \frac{7}{8}    \mid  Z \geq d \right] 
\nonumber \\ 
& \lesssim &  2k - 0.301 \sqrt{k},  \label{eq:alg22}
\end{eqnarray}
The first equality holds by rearranging and replacing $X$ by $k-\sqrt{\frac{k}{2}} \cdot  Z $. 
The second inequality holds 
since the function $f(x)= \Ex\left[ \max( x , Z  )- x  \cdot \frac{7}{8} \right] $ is a decreasing function in $x$ for $x \leq d$. The last inequality holds for large enough $k$  by the central limit theorem. 
Combining Equations~\eqref{eq:alg21} and \eqref{eq:alg22} concludes the proof.
\end{proof}
The proof then follows by considering the two mentioned cases:
If $p \geq  \frac{1}{2}$ then when considering $\pi_2$, we get that
\begin{eqnarray}
\frac{\alg_{\inst} (\pi_2)}{\opt_{\inst,\pi_2}(\pi_2)} 
& \leq & \frac{\frac{1}{2} \cdot \opt_{\inst,\pi_2}(\pi_2) + k -  0.115 \sqrt{k}}{\opt_{\inst,\pi_2}(\pi_2)} \nonumber \\
& = & \frac{1}{2} + \frac{k -  0.115 \sqrt{k}}{\opt_{\inst,\pi_2}(\pi_2)} \nonumber \\
& \leq & \frac{1}{2} + \frac{k -  0.115 \sqrt{k}}{2k -  0.224 \sqrt{k}} \leq 1- \frac{0.001}{\sqrt{k}}, \nonumber
\end{eqnarray}
where the first inequality is by Claim~\ref{cl:alg1}, and the second inequality is by Claim~\ref{cl:opt2}.

If $p< \frac{1}{2}$ then when considering $\pi_1$, we get that
\begin{eqnarray}
\frac{\alg_{\inst} (\pi_1)}{\opt_{\inst,\pi_1}(\pi_1)} 
& \leq & \frac{\frac{1}{2} \cdot \opt_{\inst,\pi_1}(\pi_1) + k -  0.150 \sqrt{k}}{\opt_{\inst,\pi_1}(\pi_1)} \nonumber \\
& = & \frac{1}{2} + \frac{k -  0.150 \sqrt{k}}{\opt_{\inst,\pi_1}(\pi_1)} \nonumber \\
& \leq & \frac{1}{2} + \frac{k -  0.150 \sqrt{k}}{2k -  0.291 \sqrt{k}} \leq 1- \frac{0.002}{\sqrt{k}}, \nonumber
\end{eqnarray}
where the first inequality is by Claim~\ref{cl:alg2}, and the second inequality is by Claim~\ref{cl:opt1}.
\end{proof}

\section{Open Problems}
Our goal in this paper was to ask whether, with respect to the new benchmark of the order-competitive ratio, it is possible to achieve better asymptotic results than with respect to the traditional competitive-ratio.

One natural Open question is whether in settings where the best competitive-ratio is half, it is possible to achieve a better than half order-competitive ratio. \ificalp{Ezra et al.}\citet{ezra2023next} showed that this is possible for single-choice prophet inequality, but for many other feasibility constraints (e.g., matching, matroids, knapsack, etc.), this is still an open question.
Another open question is what is the best order-competitive ratio or competitive-ratio for the family downward-closed feasibility constraints, and whether they are the same. The best known lower bound on the competitive-ratio (and also the order-competitive ratio) is $O\left(\frac{1}{log^{2}(n)}\right)$ by \ificalp{Rubinstein}\citet{DBLP:conf/stoc/Rubinstein16}.


\section*{Acknowledgments}
This project is supported by the ERC Advanced Grant 788893 AMDROMA, EC H2020RIA project “SoBigData++” (871042), PNRR MUR project PE0000013-FAIR”, PNRR MUR project  IR0000013-SoBigData.it.

\bibliographystyle{abbrvnat}
\bibliography{bib}

\end{document}